\newcommand{\FIGS}{./figures}
\newcommand{\sHess}[3] {\partial^2 #1/\partial #2 \partial #3}
\newcommand{\bfrac}[2] {\left[ #1 / #2 \right]}
\renewcommand*{\to}{\rightarrow}
\newcommand{\set}[1]{ \left\{ #1 \right\} }
\newcommand{\rate}[1]{\stackrel{#1}{\longrightarrow}}
\newcommand{\revrate}[1]{\stackrel{#1}{\longleftarrow}}
\newtheorem{theorem}{Theorem}[section]
\newenvironment{proof}[1][Proof]{\begin{trivlist}
\item[\hskip \labelsep {\bfseries #1}]}{\end{trivlist}}
\newcommand{\qed}{\nobreak \ifvmode \relax \else
      \ifdim\lastskip<1.5em \hskip-\lastskip
      \hskip1.5em plus0em minus0.5em \fi \nobreak
      \vrule height0.75em width0.5em depth0.25em\fi}
\begin{document}

\title{Equality statements for entropy change in open systems}
\author{John M. Robinson}
\affiliation{%
Department of Biochemistry and Molecular Genetics, University of Alabama at Birmingham, Birmingham, Alabama 35294, USA 
}%
\email{jmr@uab.edu}
\date{\today}

\begin{abstract}
The entropy change of a (non-equilibrium) Markovian ensemble is calculated from (1) the ensemble phase density $p(t)$ evolved as iterative map, $p(t) = \mathbb{M}(t) p(t- \Delta t)$ under detail balanced transition matrix $\mathbb{M}(t)$, and (2) the invariant phase density $\pi(t) = \mathbb{M}(t)^{\infty} \pi(t) $. A virtual measurement protocol is employed, where variational entropy is zero, generating exact expressions for irreversible entropy change in terms of the Jeffreys measure, $\mathcal{J}(t) = \sum_{\Gamma} [p(t) - \pi(t)] \ln \bfrac{p(t)}{\pi(t)}$, and for reversible entropy change in terms of the Kullbach-Leibler measure, $\mathcal{D}_{KL}(t) = \sum_{\Gamma} \pi(0) \ln \bfrac{\pi(0)}{\pi(t)}$. Five properties of $\mathcal{J}$ are discussed, and Clausius' theorem is derived. 
\end{abstract}
\pacs{05.70.Ln, 05.20.-y, 05.40.-a}

\keywords{allosteric regulation, signal transduction, energy landscape, Markov network, Markov chain, statistical mechanics}
\maketitle

\emph{Reversible manipulation} is the principal tool of the thermodynamicist.  Reversibility appears in two forms: the quasi-static time-forward reversible transition and the microscopically-reversible  time-reversed (or adjoint) stochastic transition~\cite{Onsager:1931uq, Kurchan:1998a, Crooks:1999a}.  Application of microscopic reversibility to the path integral formulation of stochastic processes~\cite{Onsager:1953a, Lebowitz:1999a} has resulted in a set of fluctuation theorems (FT) for systems arbitrarily far from equilibrium~\cite{Harris:2007a}. FT, despite their elegance, do not provide a much needed \emph{general definition of entropy change}---an equality statement providing the entropy change for any transition---of an ensemble of Markovian systems. Here, using both the quasi-static and time-reversed transitions in the path integral approach to the dynamics of a Markovian system, we produce these equality expressions for microscopic and macroscopic entropy change.

A collection of $M$ classical particles undergoing Hamiltonian dynamics is partitioned, through scale separation, into system and bath~\cite{Zwanzig:2001a, Givon:2004kx}. The system, consisting of $N$ particles is transformed into a Markovian stochastic process described by $6N$ generalized coordinates. Phase space and time are taken as discrete quantities. Each coordinate is an $m$-tuple, and time consists of equally spaced intervals, $\Delta \tau = \tau_{i+1} - \tau_i$. The system trajectory is given as the time evolution of a phase point, ${ \sigma(\tau) = \delta \left[ (x;\tau) - (x_0;\tau) \right]}$, in phase space $\Gamma$, a $(6N\times m)$-tuple, with $x, x_0 \in \Gamma$, according to the stochastic iterative map
\begin{equation}
\sigma(\tau_i) = \mathbb{M}_{\tau_i} \sigma(\tau_{i-1}),
\label{eq:stoch_map}
\end{equation}
where $\mathbb{M}_{\tau_i} = \mathbb{M}_{\tau_i}({\sigma(\tau_i) | \sigma(\tau_{i-1}}))$ is interpreted as a stochastic matrix~\cite{Gillespie77}. Real systems, which operate under colored (OU) noise, are managed by requiring that the discrete time step in (\ref{eq:stoch_map}) be much longer than the correlation time of the noise. 
Following Gibbs, we consider an ensemble of such collections. Interpreting $\mathbb{M}_{\tau_i}$ as a transition matrix (rather than a stochastic matrix)~\cite{vanKampen:2007a} and defining the phase probability as the normalized density of phase points, $P(\tau_i) = \overline{\sigma(\tau_i)}$, the dynamics of the ensemble is a time-inhomogenious Markov chain
\begin{equation}
P(\tau_i) = \mathbb{M}_{\tau_i} P(\tau_{i-1}).
\label{eq:deterministic_map}
\end{equation}
It is assumed that $\mathbb{M}_{\tau_i}$ is a known quantity obtained through experimental parametrization or on the basis of theory. From $\mathbb{M}_{\tau_i}$ and given starting phase density $p(\tau_{0}) = P(\tau_0)$, two time-dependent quantities of interest are determined---the time-dependent phase densitiy $p(\tau_{i})$ and the time-dependent invariant phase density, $\pi(\tau_i) = \mathbb{M}_{\tau_i}^{\infty} \pi(\tau_i)$, obtained as follows: the dynamics at time $\tau_i$ are stoppped, then the density is evolved in virtual time $t_i \to \infty$ under stationary $\mathbb{M}_{\tau_i}$, according to (\ref{eq:deterministic_map}).  For nonequilibrium system ensembles, $\pi$ is a virtual quantity. For equilibrium ensembles that undergo  quasi-static perturbation, $\pi$ is a real quantity. Our results apply to ensembles that evolve according to (\ref{eq:deterministic_map}) with transition matrices $\mathbb{M}_{\tau_i}$ that are Hermitian. These systems possess three important properties~\cite{Neil:1993a}. \emph{(i)} Microscopic reversibility---$\mathbb{M}_{\tau_i}$ is self-adjoint~\cite{Onsager:1953a, Crooks:1999a}: $\mathbb{M}_{\tau_i} \pi(\tau_i) = \widetilde{\mathbb{M}}_{\tau_i} \pi(\tau_i)$. \emph{(ii)} Invariance---the invariant distribution of a stationary Markov process is independent of the ensemble history: $\lim_{t_n \to \infty} \mathbb{M}_{\tau_i}^{t_n}p(\tau_i) = \pi(\tau_i)$. \emph{(iii)} Stationarity---with invariant density: $\pi(\tau_i) = \mathbb{M}_{\tau_i} \pi(\tau_i)$.  Microscopic reversibility is a property of physical systems~\cite{Wigner:1954a} and a fundamental postulate of physics~\cite{Heisenberg:1930a}.

\begin{figure}
\centering
\includegraphics[scale=0.75]{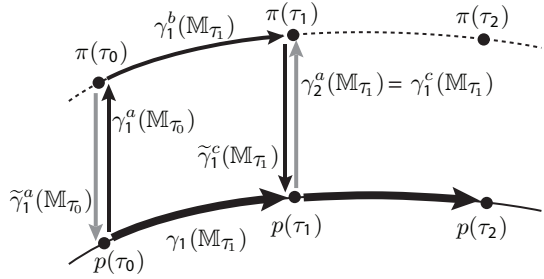}
\caption{\label{fig:prescription}
Ensemble undergoing a general perturbation. Evolution of phase density $p$ along path $\gamma$, consisting of small displacements $\gamma_i$.  The entropy change under $\gamma_i$ is measured along the virtual paths $\gamma_i^a$, $\gamma_i^b$, $\widetilde{\gamma}_i^c$ involving the invariant density $\pi$.
}
\end{figure}
\emph{General Perterbation}---With these properties in mind, we calculate the entropy change of a thermodynamic ensemble of Markovian systems undergoing an arbitrary forced perturbation. Using the prescription in Fig.~\ref{fig:prescription}, entropy change for the ensemble transition along the path increment $\gamma_i$ (bold arrows) is evaluated using three measurements that, being virtual, do not perturb the system~\cite{Heisenberg:1930a, Feynman:1948a}. The path $\gamma$ taken by one  system starting at $\sigma(t_0)$ is given by the time-ordered collection of phase points visited  by the system $\set{ \sigma(t_0)  \sigma(t_1) \cdots  \sigma(t_n)}$ under evolution by $\mathbb{M}_{t_0} \mathbb{M}_{t_1} \cdots \mathbb{M}_{t_n}$,
\begin{equation*}
\gamma \equiv \sigma(t_0) \rate{\mathbb{M}_{t_1}} \sigma(t_1) \rate{\mathbb{M}_{t_2}} \cdots  \sigma(t_{n-1}) \rate{\mathbb{M}_{t_{n}}}  \sigma(t_n).
\end{equation*}
The adjoint path is 
\begin{equation*}
\widetilde{\gamma} \equiv \widetilde{\sigma}(t_n) \revrate{\widetilde{\mathbb{M}}_{t_n}} 
\widetilde{\sigma}(t_{n-1})
\cdots \revrate{\widetilde{\mathbb{M}}_{t_2}} \widetilde{\sigma}(t_1) \revrate{\widetilde{\mathbb{M}}_{t_1}} \widetilde{\sigma}(t_0),
\end{equation*}
where $\widetilde{\mathbb{M}}_{t_j} = \widetilde{\mathbb{M}}_{t_j}({\widetilde{\sigma}(t_j) | \widetilde{\sigma}(t_{j-1}}))$. The adjoint path starts where $\gamma$ ends, $\widetilde{\sigma}(t_0) = \sigma(t_n)$. For a system in state $\sigma(t_0)$, the probability that it follows the path $\gamma$ is given by the product of the single time step transition probabilites, $\mathcal{P}(\gamma | p) = \prod_{i=i}^n \mathbb{M}_{t_{i}} $, where $p=P(\sigma(t_0))$.
Similarly, for the adjoint transition, the conditional adjoint path probability is $\mathcal{P}(\widetilde{\gamma} | \widetilde{p})  = \prod_{i=i}^n \widetilde{\mathbb{M}}_{t_{i}} $, where $\widetilde{p}=P(\widetilde{\sigma}(t_0))$ is the probability that the system starts in state $\widetilde{\sigma}(t_0)$. Using the definition of conditional probability and taking the quotient of path probabilities, we obtain
\begin{equation}
\frac{\mathcal{P}(\gamma)}{\mathcal{P}(\widetilde{\gamma})} =  \frac{ \mathcal{P}(\gamma | p) p}{\mathcal{P}(\widetilde{\gamma} | \widetilde{p}) \widetilde{p} } = \prod_{i=1}^n \frac{ \mathbb{M}_{t_i} }{ \widetilde{\mathbb{M}}_{t_i} }\frac{p}{\widetilde{p}}.
\end{equation}
The conditional path probability is also given as a function of action~\cite{Feynman:1948a, Onsager:1953a}, $\mathcal{P}(\gamma | p)  = \exp(-\sum_i \mathcal{A}_{t_i}(\gamma))$, and we note the correspondence, $\ln \mathbb{M}_{t_i} =  - A_{t_i}$.  Defining the microscopic entropy change of the collection~\cite{Lebowitz:1999a, seifert:2005a, Harris:2007a, Rahav:2007a}
\begin{equation}
\begin{array}{rcl}
\widetilde{\delta} \mathcal{S} & \equiv & \ln \mathcal{P}(\gamma)  - \ln \mathcal{P}(\widetilde{\gamma}) \\
  &  = & \ln \left[\bfrac{p}{\widetilde{p}} e^{\sum_i \ln \mathbb{M}_{t_i} -  \ln \widetilde{\mathbb{M}}_{t_i}} \right]  
\end{array},
\label{eq:var_ent}
\end{equation}
we obtain a microscopic entropy balance equation
\begin{equation}
\widetilde{\delta} \mathcal{S}  = \Delta \mathcal{S}_{\gamma}+ \ln \bfrac{p}{\widetilde{p}},
\label{eq:main}
\end{equation}
involving gain of entropy by the heat bath, $\Delta \mathcal{S}_{\gamma} = \sum_i \ln \mathbb{M}_{t_i} / \widetilde{\mathbb{M}}_{t_i}$, and gain of entropy by the system, $\ln (p / \widetilde{p})$. In (\ref{eq:var_ent}), entropy and action are on equal footing: entropy is proportional to the logarithm of exponentiated action. Evaluation of (\ref{eq:main}) is straightforward when some path $\gamma$ can be identified where $\widetilde{\delta} \mathcal{S} =0$. We show that the measurement in Fig.~\ref{fig:prescription} is along such a path.

Each path $\gamma^a_i$ involves the evolution in virtual time $t_0, t_1, \ldots, t_n$ of the ensemble starting in $p(\tau_{i-1})$ to the stationary distibution $\pi(\tau_{i-1})$. The evolution is a virtual time-homogeneous transition governed by the transition matrix $\mathbb{M}_{\tau_{i-1}}$. For each $\gamma^a_i$, (\ref{eq:main}) provides, $\Delta \mathcal{S}_{\gamma^a_i} =\widetilde{\delta} \mathcal{S}_{\gamma^a_i}  - \ln \bfrac{p(\tau_{i-1})}{\pi(\tau_{i-1})}$. Using the property of invariance \emph{(ii)}, the path probability density is 
\begin{equation}
\mathcal{P}(\gamma^a_i) = \mathbb{M}_{\tau_{i-1}}^{\infty} p(\tau_{i-1}) = \pi(\tau_{i-1}).
\label{eq:forward_path}
\end{equation}
Application of detailed balance \emph{(i)} followed by sequential application of stationarity \emph{(iii)} to the  adjoint path probability density yields
\begin{equation}
\begin{array}{ccl}
 \mathcal{P}(\widetilde{\gamma}^a_i)  &  = & \widetilde{\mathbb{M}}_{\tau_{i-1}}^{\infty}\pi(\tau_{i-1}) = \mathbb{M}_{\tau_{i-1}}^{\infty}  \pi(\tau_{i-1})\\
  &  = & \pi(\tau_{i-1}) .
\end{array}
\label{eq:adjoint_path}
\end{equation}
For each path $\gamma^a_i$, the microscopic entropy change of the collection is zero, $\widetilde{\delta} \mathcal{S}_{\gamma^a_i} = \ln \bfrac{\mathcal{P}(\gamma^a_i)  }{ \mathcal{P}(\widetilde{\gamma}^a_i) }= 0$, yielding, $\Delta \mathcal{S}_{\gamma^a_i} = - \ln \bfrac{p(\tau_{i-1})}{\pi(\tau_{i-1})}$. The macroscopic entropy over the disjoint paths $\gamma^a_i$ is the ensemble averaged entropy
\begin{equation}
\Delta \mathbf{S}_{\gamma^a} = - \sum_{i=0}^{n-1} \left< \ln  \bfrac{p(\tau_{i})}{\pi(\tau_{i})} \right>_{p(\tau_{i})},
\label{eq:macroscopic_entropy_a}
\end{equation}
where $\left< f(\Gamma) \right>_{x(\Gamma)} = \sum_{\Gamma} x(\Gamma) f(\Gamma)$.


Each path $\gamma^b_i$ involves the virtual evolution of the invariant starting distribution $\pi(\tau_{i-1})$ to the invariant distibution $\pi(\tau_{i})$ under (virtual) time-homogeneous evolution by $\mathbb{M}_{\tau_i}$. For each $\gamma^b_i$, (\ref{eq:main}) provides, $\Delta \mathcal{S}_{\gamma^b_i} =\widetilde{\delta} \mathcal{S}_{\gamma^b_i}   - \ln \bfrac{\pi(\tau_{i-1})}{\pi(\tau_{i})}$. By the same arguments used in (\ref{eq:forward_path}) and (\ref{eq:adjoint_path}), for each path $\gamma^b_i$, $\widetilde{\delta} \mathcal{S}_{\gamma^b_i} = \ln \bfrac{\mathcal{P}(\gamma^b_i)  }{ \mathcal{P}(\widetilde{\gamma}^b_i) }= 0$, yielding, $\Delta \mathcal{S}_{\gamma^b_i} =  - \ln \bfrac{\pi(\tau_{i-1})}{\pi(\tau_{i})}$. We concatenate the $\gamma^b_i$ path segments into a continuous virtual path $\gamma^b$ for the evolution of $\pi$.  The microscopic entropy over the thermodynamically reversible path $\gamma^b$ is, after cancelling terms, $\Delta \mathcal{S}_{\gamma^b} = \sum_{i=1}^n \Delta \mathcal{S}_{\gamma^b_i} =  -  \ln \bfrac{\pi(\tau_0)}{\pi(\tau_n)} $. The reversible macroscopic entropy flow\emph{ into the system} during $\gamma^b$ is
\begin{equation}
\Delta \mathbf{S}_{rev} = - \Delta \mathbf{S}_{\gamma^b} = \left< \ln \bfrac{\pi(\tau_0)}{\pi(\tau_n)}  \right>_{\pi(\tau_{0})} ,
\label{eq:rev}
\end{equation}
which is the relative, or Kullbach-Leibler, entropy $\mathcal{D}_{KL}[\mathbf{x},\mathbf{y}] = \sum_{\Gamma} \mathbf{x} \ln(\mathbf{x}/ \mathbf{y})$~\cite{Kullbach:1951a}.

Each path $\gamma^c_i$ involves the adjoint (virtual) time-homogeneous evolution of the ensemble starting from the invariant distribution  $\pi(\tau_{i})$ to the real distribution $p(\tau_{i})$ under $\widetilde{\mathbb{M}}_{\tau_i}$. For each $\gamma^c_i$ the entropy (\ref{eq:main}) is, $\Delta \mathcal{S}_{\gamma^c_i} =\widetilde{\delta} \mathcal{S}_{\gamma^c_i}   - \ln \bfrac{\pi(\tau_{i})}{p(\tau_{i})}$. Again, by the same arguments used in (\ref{eq:forward_path}) and (\ref{eq:adjoint_path}), for each path $\gamma^c_i$, $\widetilde{\delta} \mathcal{S}_{\gamma^c_i} = \ln \bfrac{\mathcal{P}(\gamma^c_i)  }{ \mathcal{P}(\widetilde{\gamma}^c_i) }= 0$, yielding, $\Delta \mathcal{S}_{\gamma^c_i} = - \ln \bfrac{\pi(\tau_i)}{p(\tau_i)} $. The macroscopic entropy change over the disjoint paths $\gamma^c_i$ is 
\begin{equation}
\Delta \mathbf{S}_{\gamma^c} = - \sum_{i=1}^{n} \left< \ln  \bfrac{\pi(\tau_{i})} {p(\tau_{i})} \right>_{\pi(\tau_{i})}.
\label{eq:macroscopic_entropy_c}
\end{equation}
From (\ref{eq:macroscopic_entropy_a}) and (\ref{eq:macroscopic_entropy_c}), the irreversible macroscopic entropy flow \emph{into the system} over $\gamma^a$ and $\gamma^c$,  $\Delta \mathbf{S}_{irrev} = - (\Delta \mathbf{S}_{\gamma^a} + \Delta \mathbf{S}_{\gamma^c})$, is
\begin{equation}
\Delta \mathbf{S}_{irrev} = B_{0} + \sum_{i=1}^{n-1} \left< \ln  \bfrac{p(\tau_{i})}{\pi(\tau_{i})}  \right>_{p(\tau_{i}) - \pi(\tau_{i})} + B_{n},
\label{eq:irrev}
\end{equation}
where  $B_{0} =\sum_{\Gamma}  p(\tau_{0}) \ln p(\tau_0) / \pi(\tau_0)$ and $B_{n} =\sum_{\Gamma}  \pi(\tau_{n}) \ln \pi(\tau_n) / p(\tau_n)$ are boundary Kullbach-Leibler integrals. The sum in (\ref{eq:irrev}) is over the Jeffreys invariant divergence measure, $\mathcal{J}[\mathbf{x},\mathbf{y}] = \sum_{\Gamma} (\mathbf{x} - \mathbf{y}) \ln(\mathbf{x}/ \mathbf{y})$~\cite{Jeffreys:1946a}. Jeffreys~\cite{Jeffreys:1961} and others \cite{Kullbach:1951a} have commented on the many remarkable properties of $\mathcal{J}$.

\begin{figure}
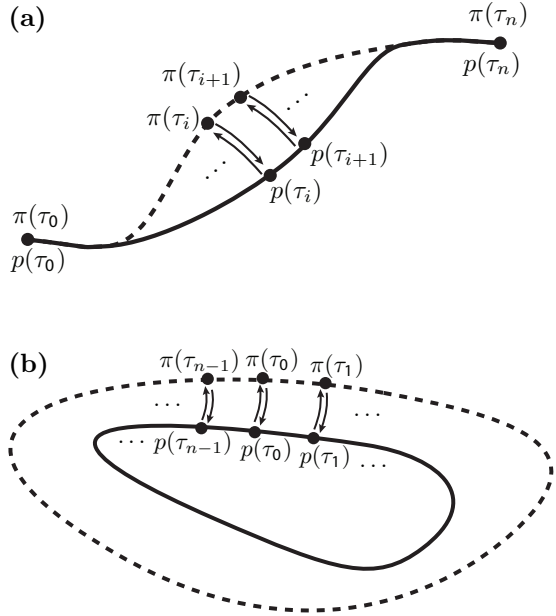

\centering
$\begin{array}{l}
\multicolumn{1}{l}{\mbox{\bf (a)}} \\ [-0.53cm] 
\includegraphics[scale=0.8]{\FIGS/step}\\
\\
\multicolumn{1}{l}{\mbox{\bf (b)}} \\ [-0.53cm] 
\includegraphics[scale=0.75]{\FIGS/cycle}
\end{array}$
\caption{\label{fig:transformations}
Two perterbations. (a) Non-equilibriuum perturbation with $p$ starting and ending in equilibrium (convergence of $p$ and $\pi$). (b) Periodic perturbation with closed orbits of $p$ and $\pi$.
}
\end{figure}

\emph{Perturbation \#1}---We apply the results obtained for the general perturbation to two specific, and important, perturbations (Fig.~\ref{fig:transformations}). In Fig.~\ref{fig:transformations}a, the system is perturbed from one equilibrium state to another.  For the (virtual) equilibrium path (dashed line), macroscopic entropy flow into the system is, from (\ref{eq:rev}), the Kullbach-Leibler entropy, $\Delta \mathbf{S}_{rev} = \mathcal{D}_{KL}[\pi(\tau_0), \pi(\tau_n)]$. For the (real) non-equilibrium path (solid line), macroscopic entropy is the sum of the reversible entropy (\ref{eq:rev}) and the irreversible entropy (\ref{eq:irrev}), $\Delta \mathbf{S}_{tot} = \Delta \mathbf{S}_{rev} + \Delta \mathbf{S}_{irrev}$. Here, $p(\tau_0) = \pi(\tau_0)$ and $p(\tau_n) = \pi(\tau_n)$, causing $B_0$ and $B_n$ to vanish. The irreversible entropy flow into the system is the discrete time integral over the Jeffreys invariant measure, $\Delta \mathbf{S}_{irrev} = \sum_{i=1}^{n-1} \mathcal{J}[p(\tau_i),\pi(\tau_i)] $.  The total entropy flow into the system is
\begin{equation}
\Delta \mathbf{S}_{tot} = \mathcal{D}_{KL}[\pi(\tau_0), \pi(\tau_n)] + \sum_{i=1}^{n-1} \mathcal{J}[p(\tau_i),\pi(\tau_i)] .
\end{equation}

\emph{Perturbation \#2}---In Fig.~\ref{fig:transformations}b,  we consider an ensemble of systems undergoing periodic perturbation with period $\omega \Delta \tau$, where $\mathbb{M}_{\tau_i} = \mathbb{M}_{\tau_{i+\omega}}$. Using (\ref{eq:deterministic_map}), an initial phase density $p(\tau_0) = P(\sigma(\tau_{0}))$, is prepeared from some arbitrary phase density $p(\tau_{-\Omega})$ through the equilibration process: $p(\tau_0) = \left( \mathbb{M}_{\tau_{\omega}} \mathbb{M}_{\tau_{\omega -1}}  \ldots \mathbb{M}_{\tau_1} \right)^{\Omega/\omega} p(\tau_{-\Omega})$, where $\Omega/\omega \in \mathbb{I}^{+} \gg 1$. While the Poincar\'e recurrence time for any one system may be extremely long, the recurrence time for the ensemble is $\omega \Delta \tau$. From (\ref{eq:rev}) and using the property,  $\pi(\tau_0) = \pi(\tau_n)$, we obtain $\Delta \mathbf{S}_{rev} = \mathcal{D}_{KL}[\pi(\tau_0), \pi(\tau_n)] = 0$.  From (\ref{eq:irrev}) and the property, $p(\tau_0) =p(\tau_n)$, we obtain $\Delta \mathbf{S}_{irrev} = \sum_{i=1}^{n} \mathcal{J}[p(\tau_i),\pi(\tau_i)] $.  The total entropy flow into the system over one cycle of perturbation is
\begin{equation}
\Delta \mathbf{S}_{tot} = \sum_{i=1}^{n} \mathcal{J}[p(\tau_i),\pi(\tau_i)].
\label{eq:macroscopic_entropy}
\end{equation}
\emph{The total entropy transferred from the bath to the system over a thermodynamic cycle is the time integral of the Jeffreys divergence between real and invariant phase densities.}





\emph{Properties of $\mathcal{J}$}---The properties of $\mathcal{J}$ generate some important conclusions.  $\mathcal{J}$ is almost positive definite, meaning \emph{(i)} $\mathcal{J}[\mathbf{x},\mathbf{y}] \geq 0$ and \emph{(ii)} $\mathcal{J}[\mathbf{x},\mathbf{y}] = 0$ only when $x = y$. \emph{(iii)} $\mathcal{J}$ is symmetric: $\mathcal{J}[\mathbf{x},\mathbf{y}] = \mathcal{J}[\mathbf{y},\mathbf{x}]$. \emph{(iv)} $\mathcal{J}$ is a linear measure: $\det \left( \sHess{\mathcal{J}}{\mathbf{x}_i}{ \mathbf{x}_j} \right) = 0$. $\mathcal{D}_{KL}$ satisfies \emph{(i)}, \emph{(ii)}~\cite{Gray:2007a} and \emph{(iv)}. See EPAPS Document No. [] for proofs and further discussion. 

For an isothermal (cannonical) system, $\Delta \mathbf{S} = - \Delta \mathbf{S}_{tot}$, is the heat flow per temperature from the system to the bath during a periodic perturbation, $\Delta \mathbf{S} = \oint \beta d\mathbf{Q}$, where $\beta = 1/k_B T$ is inverse temperature in units of energy. Clausius' statement of the second law of thermodynamics is obtained from (\ref{eq:macroscopic_entropy}) and properties \emph{(i)} and \emph{(ii)},
\[
\oint \beta d\mathbf{Q} \leq 0.
\]

Practical application of (\ref{eq:macroscopic_entropy}) to many-body systems derives from a fifth property of $\mathcal{J}$, \emph{(v)} decomposability~\cite{Jeffreys:1961}: for a system with a decomposable Markov transition matrix
\[
\mathbb{M} = \left[ \begin{array}{cc}
\mathbb{A}  & \mathbf{0}  \\
\mathbf{0}  &   \mathbb{B} 
\end{array}
\right],
\]
the phase density decomposes: $p=p_A p_B$, $\pi=\pi_A \pi_B$, phase space decomposes: $\Gamma = \Gamma_A \Gamma_B$, and the invariant measure decomposes: $\mathcal{J}[p , \pi]  = \mathcal{J}^A[p_A , \pi_A] + \mathcal{J}^B[p_B , \pi_B]$. Defining, $\Delta \mathbf{S}^k =  \sum_i\mathcal{J}^k_i[p_i , \pi_i]$, and using \emph{(v)}, we find that macroscopic entropy is extensive, $\Delta \mathbf{S}  = \Delta \mathbf{S}^A  + \Delta \mathbf{S}^B$. The most immediate application of decomposability is the overdamped system where $\Gamma_B$ comprises momentum space and momentum is always equilibrated, $p_B = \pi_B$. Using (\ref{eq:macroscopic_entropy}) and properties \emph{(ii)} and \emph{(v)}, we obtain, $\Delta \mathbf{S}  = \Delta \mathbf{S}_A$.  Further reduction may be possible with a suitable choice of basis for $\Gamma_A$~\cite{Kitao:1999lr} and also upon coarse graining of the system~\cite{Givon:2004kx, Gohlke:2006fk}.

This work was supported by the NIH. The encouragement of Herbert Cheung; and discussions with Horacio Wio and Nikolai Chernov are gratefully acknowledged.

\clearpage
\newpage

\section*{EPAPS Document No. []}
\subsection{Some comments}

Aside from generating equality statements for reversible and irreversible entropy change, perhaps our most important contribution is  providing a purely virtual integration protocol for evaluating a variation.  Being virtual, the integration can not introduce uncertainty into the system~\cite{Feynman:1948a, Heisenberg:1930a}.  This measurement protocol is likely to find application beyond statistical mechanics.

We note that Gibbs entropy, $\mathcal{S} = - \ln p$, ~\cite{Crooks:1999a, seifert:2005a} and Boltzmann entropy, $\mathcal{S} = - \ln \pi$, follow as boundary terms from the definition of variational entropy $\widetilde{\delta} \mathcal{S}$ when $p$ or $\widetilde{p}$ are evaluated at $\pi$.

\subsection{Some properties of the Jeffreys divergence $\mathcal{J}$ and the Kullback-Leibler divergence $\mathcal{D}$}
The Jeffreys divergence measure is defined
\[
\mathcal{J}[\mathbf{x},\mathbf{y}] = \sum_{\Gamma} (\mathbf{x} - \mathbf{y}) \ln(\mathbf{x}/ \mathbf{y}).
\]
The Kullback-Leibler divergence measure is defined
\[
\mathcal{D}[\mathbf{x},\mathbf{y}] = \sum_{\Gamma} \mathbf{x}  \ln(\mathbf{x}/ \mathbf{y}).
\]

\begin{theorem}
$\mathcal{J}$ is almost positive definite, meaning \emph{(i)} $\mathcal{J}[\mathbf{x},\mathbf{y}] \geq 0$ and \emph{(ii)} $\mathcal{J}[\mathbf{x},\mathbf{y}] = 0$ only when $\mathbf{x} = \mathbf{y}$. 
\end{theorem}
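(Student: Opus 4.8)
The plan is to prove both claims at the level of the individual terms in the sum over phase space, which avoids any appeal to normalization of $\mathbf{x}$ and $\mathbf{y}$ and makes the equality condition transparent. Writing the components of the two densities as $x_k$ and $y_k$ for $k \in \Gamma$, the measure is $\mathcal{J}[\mathbf{x},\mathbf{y}] = \sum_{k} (x_k - y_k)\ln(x_k/y_k)$, so it suffices to control each summand separately.

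The key observation is that the two factors $(x_k - y_k)$ and $\ln(x_k/y_k)$ always carry the same sign. Concretely, I would introduce the ratio $t = x_k/y_k > 0$ and the single-variable function $g(t) = (t-1)\ln t$, so that the $k$-th summand equals $y_k\, g(t)$. For $t > 1$ both $t-1$ and $\ln t$ are positive; for $0 < t < 1$ both are negative; and at $t = 1$ the product vanishes. Hence $g(t) \geq 0$ for all $t > 0$, with equality precisely at $t = 1$. (Equivalently, one may invoke the elementary bound $\ln t \geq 1 - 1/t$, multiply through by $t-1$, and check the sign.) Because $y_k > 0$, each summand $y_k\, g(x_k/y_k)$ is therefore nonnegative, which establishes \emph{(i)} $\mathcal{J}[\mathbf{x},\mathbf{y}] \geq 0$ upon summation. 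For \emph{(ii)}, I would use that a sum of nonnegative terms vanishes only when every term vanishes: since $y_k > 0$, the $k$-th term is zero exactly when $g(x_k/y_k) = 0$, i.e. when $x_k = y_k$, and requiring this for all $k \in \Gamma$ gives $\mathbf{x} = \mathbf{y}$ (the converse being immediate).

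As an independent check, and to connect with the Kullback--Leibler properties already invoked in the text, I would record the decomposition $\mathcal{J}[\mathbf{x},\mathbf{y}] = \mathcal{D}[\mathbf{x},\mathbf{y}] + \mathcal{D}[\mathbf{y},\mathbf{x}]$, obtained by expanding $(x_k - y_k)\ln(x_k/y_k)$ and using $-\ln(x_k/y_k) = \ln(y_k/x_k)$. Nonnegativity of $\mathcal{J}$ then follows from $\mathcal{D} \geq 0$ (Gibbs' inequality, quoted for $\mathcal{D}_{KL}$ in the main text) applied to both orderings of the arguments, though this route requires $\mathbf{x}$ and $\mathbf{y}$ to be normalized whereas the termwise argument does not.

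The only genuine subtlety — and the step I expect to need the most care — is the boundary behavior when some $x_k$ or $y_k$ vanishes: there the logarithm diverges and $\mathcal{J}$ may take the value $+\infty$. I would handle this by restricting to strictly positive densities (equivalently, assuming $\mathbf{x}$ and $\mathbf{y}$ share the same support), or by adopting the standard conventions $0\ln 0 = 0$ and $a\ln(a/0) = +\infty$ for $a > 0$, under which the sign argument still yields a nonnegative (possibly infinite) value and the equality characterization is unchanged.
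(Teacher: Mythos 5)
Your proof is correct and uses essentially the same argument as the paper: an element-wise sign analysis showing that $(x_k - y_k)$ and $\ln(x_k/y_k)$ always share the same sign, so each summand is nonnegative and vanishes only when $x_k = y_k$. Your additions --- the explicit observation that a sum of nonnegative terms vanishes only if every term does, the alternative decomposition $\mathcal{J} = \mathcal{D}[\mathbf{x},\mathbf{y}] + \mathcal{D}[\mathbf{y},\mathbf{x}]$, and the treatment of vanishing components --- are sound refinements of the same route, not a different proof.
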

\begin{proof}
We consider, element-wise, the probability densities $\mathbf{x} = \set{x}$, $\mathbf{y} = \set{y}$, and  the Jeffreys measure, $\mathcal{J}[\mathbf{x},\mathbf{y}] =  \sum_{\Gamma} \mathcal{J}_i[x,y]$. For $x > y$, $(x -y )> 0$ and $\ln(x / y) > 0$; therefore, $\mathcal{J}_i[x,y] = (x-y) \ln(x/y) >0$.  For $x < y$, $(x -y) < 0$ and $\ln(x / y) < 0$; therefore, $\mathcal{J}_i[x,y] >0$.  For $x = y$, $(x -y) = 0$ and $\ln(x / y) = 0$; therefore, $\mathcal{J}_i[x,y] = 0$. \qed
\end{proof}
\begin{theorem}
$\mathcal{J}$ is symmetric: $\mathcal{J}[\mathbf{x},\mathbf{y}] = \mathcal{J}[\mathbf{y},\mathbf{x}]$
\end{theorem}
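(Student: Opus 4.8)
The plan is to prove the identity termwise in the sum over $\Gamma$, so that no reordering, regrouping, or convergence considerations beyond those already implicit in the definition of $\mathcal{J}$ are needed. The entire symmetry rests on a single observation: interchanging the two arguments flips the sign of each of the two factors in the summand, and the two sign flips cancel.

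First I would write out the swapped divergence directly from the definition,
\[
\mathcal{J}[\mathbf{y},\mathbf{x}] = \sum_{\Gamma} (\mathbf{y} - \mathbf{x}) \ln(\mathbf{y}/ \mathbf{x}),
\]
and then record the two elementary algebraic identities that do all the work: the prefactor obeys $(\mathbf{y} - \mathbf{x}) = -(\mathbf{x} - \mathbf{y})$, and the logarithm obeys $\ln(\mathbf{y}/ \mathbf{x}) = -\ln(\mathbf{x}/ \mathbf{y})$. The second is just the quotient rule for the logarithm, $\ln(\mathbf{y}/\mathbf{x}) = \ln \mathbf{y} - \ln \mathbf{x}$.

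Next I would substitute both identities into the summand. Since the product of the two negated factors is $(-(\mathbf{x}-\mathbf{y}))(-\ln(\mathbf{x}/\mathbf{y})) = (\mathbf{x}-\mathbf{y})\ln(\mathbf{x}/\mathbf{y})$, each term of $\mathcal{J}[\mathbf{y},\mathbf{x}]$ coincides with the corresponding term of $\mathcal{J}[\mathbf{x},\mathbf{y}]$. Summing over $\Gamma$ then gives $\mathcal{J}[\mathbf{y},\mathbf{x}] = \mathcal{J}[\mathbf{x},\mathbf{y}]$, which is the claim. As in the preceding theorem, it suffices to treat the densities element-wise, $\mathbf{x} = \set{x}$, $\mathbf{y} = \set{y}$, so the argument reduces to verifying $(y-x)\ln(y/x) = (x-y)\ln(x/y)$ for each coordinate.

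I do not expect any genuine obstacle here; the result is immediate once the two sign flips are identified. The only point deserving care is to emphasize that the cancellation occurs termwise, so the symmetry holds without any positivity or normalization hypothesis on $\mathbf{x}$ and $\mathbf{y}$ and requires none of the case analysis ($x>y$, $x<y$, $x=y$) used in establishing positive-definiteness. This symmetry is, in fact, precisely the structural feature that distinguishes $\mathcal{J}$ from the one-sided Kullback--Leibler measure $\mathcal{D}_{KL}$, and it is worth noting in passing that $\mathcal{J}[\mathbf{x},\mathbf{y}] = \mathcal{D}_{KL}[\mathbf{x},\mathbf{y}] + \mathcal{D}_{KL}[\mathbf{y},\mathbf{x}]$, from which the symmetry also follows by inspection.
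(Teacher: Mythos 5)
Your proof is correct and takes essentially the same approach as the paper: the paper's proof also works element-wise and observes that both factors $(x-y)$ and $\ln(x/y)$ are odd under exchange of $x$ and $y$, so their product (and hence each term of the sum) is unchanged. Your version merely spells out the two sign flips explicitly, which the paper summarizes as ``the product of two odd functions is even.''
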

\begin{proof}
$J$ is evaluated element-wise. Both terms $(x-y)$ and $\ln(x/y)$ are odd under exchange of $x$ and $y$. The product of two odd functions is even.  \qed
\end{proof}
\begin{theorem}
$\mathcal{J}$ does not satisfy the triangle inequality: $\mathcal{J}[x,z] \leq \mathcal{J}[x,y] + \mathcal{J}[y,z]$.
\end{theorem}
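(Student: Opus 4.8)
The plan is to \emph{refute} the triangle inequality by producing an explicit counterexample, i.e. three probability densities $\mathbf{x}, \mathbf{y}, \mathbf{z}$ for which $\mathcal{J}[\mathbf{x},\mathbf{z}] > \mathcal{J}[\mathbf{x},\mathbf{y}] + \mathcal{J}[\mathbf{y},\mathbf{z}]$. The guiding observation, which turns the search for a counterexample into a principled construction, is that $\mathcal{J}$ behaves to leading order like a \emph{squared} distance rather than a distance. Squared distances generically fail the triangle inequality along collinear configurations, exactly as the ordinary Euclidean $d^2$ does on the points $0,1,2$, where $d^2(0,2)=4 > 2 = d^2(0,1)+d^2(1,2)$. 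So I would look for densities that are ``collinear'' in the simplex.

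First I would record the small-perturbation expansion of $\mathcal{J}$. Writing $\mathbf{y} = \mathbf{x} + \bm{\delta}$ with $\sum_{\Gamma} \delta_i = 0$ (both normalized), a Taylor expansion of $\ln(1 + \delta_i/x_i)$ inside $\mathcal{J}[\mathbf{x},\mathbf{y}] = \sum_{\Gamma} (-\delta_i)\ln(x_i/(x_i+\delta_i))$ yields $\mathcal{J}[\mathbf{x},\mathbf{y}] = \sum_{\Gamma} \delta_i^2/x_i + O(\delta^3)$, a positive-definite (Fisher-information) quadratic form in $\bm{\delta}$. Next I would place three densities on a line: fix a reference $\mathbf{x}$ and a mean-zero direction $\mathbf{w}$ (so $\sum_{\Gamma} w_i = 0$), and set $\mathbf{y} = \mathbf{x} + s\mathbf{w}$, $\mathbf{z} = \mathbf{x} + 2s\mathbf{w}$ for small $s>0$. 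With $Q = \sum_{\Gamma} w_i^2/x_i > 0$, the expansion gives $\mathcal{J}[\mathbf{x},\mathbf{y}] \approx s^2 Q$, $\mathcal{J}[\mathbf{y},\mathbf{z}] \approx s^2 Q$, and $\mathcal{J}[\mathbf{x},\mathbf{z}] \approx 4 s^2 Q$, so that $\mathcal{J}[\mathbf{x},\mathbf{z}] - (\mathcal{J}[\mathbf{x},\mathbf{y}] + \mathcal{J}[\mathbf{y},\mathbf{z}]) \approx 2 s^2 Q > 0$. This already forces the inequality to fail for all sufficiently small $s$.

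To make the refutation airtight without leaning on an asymptotic argument, I would instantiate a concrete binary example, e.g. $\mathbf{x} = (0.5, 0.5)$, $\mathbf{y} = (0.6, 0.4)$, $\mathbf{z} = (0.7, 0.3)$, and evaluate the three sums exactly. One finds $\mathcal{J}[\mathbf{x},\mathbf{y}] \approx 0.0405$ and $\mathcal{J}[\mathbf{y},\mathbf{z}] \approx 0.0442$, summing to $\approx 0.085$, while $\mathcal{J}[\mathbf{x},\mathbf{z}] \approx 0.169$. Since $0.169 > 0.085$, the candidate inequality is strictly violated, which establishes the theorem. The collinear construction and the explicit instance reinforce each other: the former explains \emph{why} the inequality must fail, and the latter certifies it with a single finite computation.

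The main obstacle here is presentational rather than mathematical, as there is no deep difficulty to surmount. One must take care of two small points: (a) the displacement direction $\mathbf{w}$ must be chosen so that $Q \neq 0$, which holds for any nonzero mean-zero $\mathbf{w}$ because all $x_i > 0$; and (b) the explicit densities must be verified to give a \emph{strict} inequality, so that the computed counterexample is decisive and not an artifact of rounding. Both are easily handled in the binary case above, where the margin between the two sides is large enough to be unambiguous.
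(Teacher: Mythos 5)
Your proof is correct and takes essentially the same route as the paper: refutation by an explicit counterexample on a two-state space, with $\mathbf{y}$ the midpoint of a collinear configuration (the paper uses $\mathbf{x}=[0.25,0.75]$, $\mathbf{y}=[0.50,0.50]$, $\mathbf{z}=[0.75,0.25]$, giving $\mathcal{J}[\mathbf{x},\mathbf{z}]=1.10 > 0.27+0.27$). Your additional Taylor-expansion argument showing $\mathcal{J}$ is locally a Fisher-information quadratic form is a nice explanation of \emph{why} such midpoint counterexamples must exist, but the decisive step---the finite computation---matches the paper's proof.
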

\begin{proof}
The proof is by example (Nikolai Chernov, personal communication). Let $x = [0.25, 0.75]$, $y = [0.50, 0.50]$, and $z = [0.75, 0.25]$.  $\mathcal{J}[x,y] = 0.27$; $\mathcal{J}[y,z] = 0.27$; $\mathcal{J}[x,z] = 1.10$. We obtain, $\mathcal{J}[x,z] > \mathcal{J}[x,y] + \mathcal{J}[y,z]$. \qed
\end{proof}
By not satisfying the triangle equality, the Jeffreys measure falls short of being a topologic metric~\cite{Hocking:1988a}.  For this reason, the term ``Jeffreys divergence measure" is the preferred over the ``Jeffreys distance measure."
\begin{theorem}
$\mathcal{J}$ is a linear measure.
\end{theorem}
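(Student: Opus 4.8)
The plan is to reduce the claim to an element-wise computation and then exhibit the vanishing of a $2\times 2$ determinant. Because $\mathcal{J}[\mathbf{x},\mathbf{y}]=\sum_{\Gamma}\mathcal{J}_i[x,y]$ is additive over phase-space cells, with each kernel $\mathcal{J}_i[x,y]=(x-y)\ln(x/y)$ depending only on the matched pair of components $(x,y)$, the Hessian of $\mathcal{J}$ in its two arguments is block diagonal, one $2\times 2$ block per cell. It therefore suffices to show that the element-wise Hessian
\[
H=\begin{pmatrix} f_{xx} & f_{xy} \\ f_{yx} & f_{yy} \end{pmatrix},\qquad f(x,y)=(x-y)\ln(x/y),
\]
is singular; the full determinant is then the product of the block determinants and vanishes.

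The key steps are direct differentiation. First I would compute the gradient, $f_x=\ln(x/y)+1-y/x$ and $f_y=-\ln(x/y)+1-x/y$. Differentiating once more gives the three independent entries, each factoring through $(x+y)$: namely $f_{xx}=(x+y)/x^2$, $f_{yy}=(x+y)/y^2$, and $f_{xy}=f_{yx}=-(x+y)/(xy)$ (the mixed partials agreeing confirms $H$ is symmetric). The determinant then collapses,
\[
\det H=\frac{(x+y)^2}{x^2y^2}-\frac{(x+y)^2}{x^2y^2}=0,
\]
identically for $x,y>0$, which is the assertion.

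I expect the only real obstacle to be conceptual rather than computational: pinning down which Hessian the determinant condition refers to. The naive Hessian of $\mathcal{J}$ in the components of the single argument $\mathbf{x}$ is diagonal, with entries $(x_i+y_i)/x_i^2>0$, and is manifestly non-singular, so that cannot be the intended object. The correct reading---consistent with the companion claim that $\mathcal{D}_{KL}$ is also a linear measure---is the Hessian in the two arguments of the element-wise kernel. Indeed, the identical computation applied to $\mathcal{D}_i[x,y]=x\ln(x/y)$ gives $f_{xx}=1/x$, $f_{yy}=x/y^2$, $f_{xy}=-1/y$, hence $\det=1/y^2-1/y^2=0$, which is why $\mathcal{D}_{KL}$ satisfies \emph{(iv)} as well. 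Geometrically, $\det H\equiv 0$ says the Hessian has rank at most one everywhere, so the graph $z=f(x,y)$ has vanishing Gaussian curvature---a developable, hence ``flat'' or linear, surface---which is the sense in which $\mathcal{J}$ is a \emph{linear measure}.
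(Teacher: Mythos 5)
Your proof is correct and follows essentially the same route as the paper's: compute the element-wise Hessian of the kernel $(x-y)\ln(x/y)$ in its two arguments, obtaining $\lpd{^2 f}{x^2} = (x+y)/x^2$, $\lpd{^2 f}{y^2} = (x+y)/y^2$, $\sHess{f}{x}{y} = -(x+y)/(xy)$, and observe that the determinant vanishes identically. Your added clarifications---the block-diagonal structure justifying the element-wise reduction, the disambiguation of which Hessian is intended, and the parallel computation for $\mathcal{D}$---go slightly beyond the paper's terse presentation but do not change the argument.
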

\begin{proof}
The Hession of $\mathcal{J}[x,y]$, 
\[
H(\mathcal{J}[x,y]) = \left[ \begin{array}{cc}
\sHess{\mathcal{J}}{x}{x} & \sHess{\mathcal{J}}{x}{y} \\
\sHess{\mathcal{J}}{y}{x} & \sHess{\mathcal{J}}{y}{y} 
\end{array}
\right],
\]
is evaluated: $\sHess{\mathcal{J}}{x}{x} = (x+y)/x^2$, $\sHess{\mathcal{J}}{y}{y} = (x+y)/y^2$, and $\sHess{\mathcal{J}}{x}{y} = \sHess{\mathcal{J}}{y}{x}= -(x+y)/xy$.  By substitution we find, $\det H(\mathcal{J}) = 0$. \qed
\end{proof}

\begin{theorem}
$\mathcal{D}$ is a linear measure.
\end{theorem}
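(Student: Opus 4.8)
The plan is to follow verbatim the strategy of the preceding theorem on $\mathcal{J}$: a measure is called \emph{linear} here precisely when the Hessian of its per-element integrand is singular, $\det H = 0$. So it suffices to compute the four second partials of the element-wise Kullback-Leibler integrand and check that their determinant vanishes. First I would write $\mathcal{D}[\mathbf{x},\mathbf{y}] = \sum_\Gamma \mathcal{D}_i[x,y]$ with $\mathcal{D}_i[x,y] = x\ln(x/y) = x\ln x - x\ln y$, and differentiate once to get $\partial \mathcal{D}_i/\partial x = \ln(x/y) + 1$ and $\partial \mathcal{D}_i/\partial y = -x/y$.

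Differentiating a second time then yields $\sHess{\mathcal{D}}{x}{x} = 1/x$, $\sHess{\mathcal{D}}{y}{y} = x/y^2$, and the mixed partials $\sHess{\mathcal{D}}{x}{y} = \sHess{\mathcal{D}}{y}{x} = -1/y$. Substituting into the $2\times2$ determinant gives $(1/x)(x/y^2) - (1/y)^2 = 1/y^2 - 1/y^2 = 0$, which is the claim. An equivalent one-line justification I would mention: the two rows of $H$ are proportional, since scaling the first row $[\,1/x,\,-1/y\,]$ by the factor $-x/y$ reproduces the second row $[\,-1/y,\,x/y^2\,]$ exactly; a matrix with proportional rows is singular, so $\det H = 0$ without evaluating any product.

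The computation is routine, so the nearest thing to an obstacle is a conceptual one that deserves an explicit remark: unlike $\mathcal{J}$, the divergence $\mathcal{D}$ is \emph{not} symmetric under exchange of $\mathbf{x}$ and $\mathbf{y}$ (symmetry was established above only for $\mathcal{J}$). Hence one should not assume the Hessian is symmetric a priori, but verify it. I would note that equality of the mixed partials here follows from Clairaut's theorem, since $\mathcal{D}_i$ is smooth on the open region $x,y>0$ where the densities are supported, and that the explicit computation confirms both off-diagonal entries equal $-1/y$. The vanishing determinant ultimately reflects that $\mathcal{D}_i$ couples the pair $(x,y)$ only through the single bilinear-in-logarithm term $-x\ln y$ alongside the separable piece $x\ln x$, which is exactly what forces the row proportionality and hence the singular Hessian.
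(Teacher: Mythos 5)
Your proof is correct and follows essentially the same route as the paper's: evaluate the element-wise Hessian entries $\sHess{\mathcal{D}}{x}{x} = 1/x$, $\sHess{\mathcal{D}}{y}{y} = x/y^2$, $\sHess{\mathcal{D}}{x}{y} = \sHess{\mathcal{D}}{y}{x} = -1/y$, and observe that the determinant vanishes. Your additional remarks (row proportionality and the Clairaut justification for equal mixed partials despite the asymmetry of $\mathcal{D}$) are sound refinements of the same computation, not a different argument.
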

\begin{proof}
The Hession of $\mathcal{D}[x,y]$ is evaluated: $\sHess{\mathcal{D}}{x}{x} = 1/x$, $\sHess{\mathcal{D}}{y}{y} = x/y^2$, and $\sHess{\mathcal{D}}{x}{y} = \sHess{\mathcal{D}}{y}{x}= -1/y$.  By substitution we find, $\det H(\mathcal{D}) = 0$. \qed
\end{proof}


\begin{thebibliography}{21}
\expandafter\ifx\csname natexlab\endcsname\relax\def\natexlab#1{#1}\fi
\expandafter\ifx\csname bibnamefont\endcsname\relax
  \def\bibnamefont#1{#1}\fi
\expandafter\ifx\csname bibfnamefont\endcsname\relax
  \def\bibfnamefont#1{#1}\fi
\expandafter\ifx\csname citenamefont\endcsname\relax
  \def\citenamefont#1{#1}\fi
\expandafter\ifx\csname url\endcsname\relax
  \def\url#1{\texttt{#1}}\fi
\expandafter\ifx\csname urlprefix\endcsname\relax\def\urlprefix{URL }\fi
\providecommand{\bibinfo}[2]{#2}
\providecommand{\eprint}[2][]{\url{#2}}

\bibitem[{\citenamefont{Onsager}(1931)}]{Onsager:1931uq}
\bibinfo{author}{\bibfnamefont{L.}~\bibnamefont{Onsager}},
  \bibinfo{journal}{Physical Review} \textbf{\bibinfo{volume}{37}}
  (\bibinfo{year}{1931}).

\bibitem[{\citenamefont{Kurchan}(1998)}]{Kurchan:1998a}
\bibinfo{author}{\bibfnamefont{J.}~\bibnamefont{Kurchan}}, \bibinfo{journal}{J.
  Phys. A} \textbf{\bibinfo{volume}{31}}, \bibinfo{pages}{3719}
  (\bibinfo{year}{1998}).

\bibitem[{\citenamefont{Crooks}(1999)}]{Crooks:1999a}
\bibinfo{author}{\bibfnamefont{G.~E.} \bibnamefont{Crooks}},
  \bibinfo{journal}{Phys. Rev. E} \textbf{\bibinfo{volume}{60}},
  \bibinfo{pages}{2721} (\bibinfo{year}{1999}).

\bibitem[{\citenamefont{Onsager and Machlup}(1953)}]{Onsager:1953a}
\bibinfo{author}{\bibfnamefont{L.}~\bibnamefont{Onsager}} \bibnamefont{and}
  \bibinfo{author}{\bibfnamefont{S.}~\bibnamefont{Machlup}},
  \bibinfo{journal}{Phys. Rev.} \textbf{\bibinfo{volume}{91}},
  \bibinfo{pages}{1505} (\bibinfo{year}{1953}).

\bibitem[{\citenamefont{Lebowitz and Spohn}(1999)}]{Lebowitz:1999a}
\bibinfo{author}{\bibfnamefont{J.~L.} \bibnamefont{Lebowitz}} \bibnamefont{and}
  \bibinfo{author}{\bibfnamefont{H.}~\bibnamefont{Spohn}}, \bibinfo{journal}{J.
  Stat. Phys.} \textbf{\bibinfo{volume}{95}}, \bibinfo{pages}{333}
  (\bibinfo{year}{1999}).

\bibitem[{\citenamefont{Harris and Sch\"{u}tz}(2007)}]{Harris:2007a}
\bibinfo{author}{\bibfnamefont{R.~J.} \bibnamefont{Harris}} \bibnamefont{and}
  \bibinfo{author}{\bibfnamefont{G.~M.} \bibnamefont{Sch\"{u}tz}},
  \bibinfo{journal}{J. Stat. Mech.} \textbf{\bibinfo{volume}{2007}},
  \bibinfo{pages}{P07020} (\bibinfo{year}{2007}).

\bibitem[{\citenamefont{Givon et~al.}(2004)\citenamefont{Givon, Kupferman, and
  Stuart}}]{Givon:2004kx}
\bibinfo{author}{\bibfnamefont{D.}~\bibnamefont{Givon}},
  \bibinfo{author}{\bibfnamefont{R.}~\bibnamefont{Kupferman}},
  \bibnamefont{and} \bibinfo{author}{\bibfnamefont{A.}~\bibnamefont{Stuart}},
  \bibinfo{journal}{Nonlinearity} p. \bibinfo{pages}{R55}
  (\bibinfo{year}{2004}).

\bibitem[{\citenamefont{Zwanzig}(2001)}]{Zwanzig:2001a}
\bibinfo{author}{\bibfnamefont{R.}~\bibnamefont{Zwanzig}},
  \emph{\bibinfo{title}{Nonequilibrium statistical mechanics}}
  (\bibinfo{publisher}{Oxford Univ. Press}, \bibinfo{year}{2001}).

\bibitem[{\citenamefont{Gillespie}(1977)}]{Gillespie77}
\bibinfo{author}{\bibfnamefont{D.~T.} \bibnamefont{Gillespie}},
  \bibinfo{journal}{J Phys Chem} \textbf{\bibinfo{volume}{82}},
  \bibinfo{pages}{2340} (\bibinfo{year}{1977}).

\bibitem[{\citenamefont{van Kampen}(2007)}]{vanKampen:2007a}
\bibinfo{author}{\bibfnamefont{N.~G.} \bibnamefont{van Kampen}},
  \emph{\bibinfo{title}{Stochastic processes in physics and chemistry}}
  (\bibinfo{publisher}{Elsevier}, \bibinfo{address}{Amsterdam},
  \bibinfo{year}{2007}), \bibinfo{edition}{3rd} ed.
  
\bibitem[{\citenamefont{Neal}(1993)}]{Neil:1993a}
\bibinfo{author}{\bibfnamefont{R.~M.} \bibnamefont{Neal}},
  \bibinfo{journal}{Technical Report CRG-TR-93-1, Dept. of Comp. Sci., U.
  Toronto}  (\bibinfo{year}{1993}).


\bibitem[{\citenamefont{Wigner}(1954)}]{Wigner:1954a}
\bibinfo{author}{\bibfnamefont{E.~P.} \bibnamefont{Wigner}},
  \bibinfo{journal}{J Chem Phys} \textbf{\bibinfo{volume}{22}},
  \bibinfo{pages}{1912} (\bibinfo{year}{1954}).

\bibitem[{\citenamefont{Heisenberg}(1930)}]{Heisenberg:1930a}
\bibinfo{author}{\bibfnamefont{W.}~\bibnamefont{Heisenberg}},
  \emph{\bibinfo{title}{The physical principles of quantum theory}}
  (\bibinfo{publisher}{U. Chicago Press}, \bibinfo{address}{Chicago},
  \bibinfo{year}{1930}).

\bibitem[{\citenamefont{Feynman}(1948)}]{Feynman:1948a}
\bibinfo{author}{\bibfnamefont{R.~P.} \bibnamefont{Feynman}},
  \bibinfo{journal}{Rev. Mod. Phys.} \textbf{\bibinfo{volume}{20}},
  \bibinfo{pages}{367} (\bibinfo{year}{1948}).

\bibitem[{\citenamefont{Seifert}(2005)}]{seifert:2005a}
\bibinfo{author}{\bibfnamefont{U.}~\bibnamefont{Seifert}},
  \bibinfo{journal}{Phys. Rev. Lett.} \textbf{\bibinfo{volume}{95}},
  \bibinfo{pages}{040602} (\bibinfo{year}{2005}).

\bibitem[{\citenamefont{Rahav and Jarzynski}(2007)}]{Rahav:2007a}
\bibinfo{author}{\bibfnamefont{S.}~\bibnamefont{Rahav}} \bibnamefont{and}
  \bibinfo{author}{\bibfnamefont{C.}~\bibnamefont{Jarzynski}},
  \bibinfo{journal}{J. Stat. Mech.} \textbf{\bibinfo{volume}{in press}}
  (\bibinfo{year}{2007}).
  
\bibitem[{\citenamefont{Kullback and Leibler}(1951)}]{Kullbach:1951a}
\bibinfo{author}{\bibfnamefont{S.}~\bibnamefont{Kullback}} \bibnamefont{and}
  \bibinfo{author}{\bibfnamefont{R.~A.} \bibnamefont{Leibler}},
  \bibinfo{journal}{Ann. Math. Stat.} \textbf{\bibinfo{volume}{22}},
  \bibinfo{pages}{79} (\bibinfo{year}{1951}).

\bibitem[{\citenamefont{Jefferys}(1946)}]{Jeffreys:1946a}
\bibinfo{author}{\bibfnamefont{H.}~\bibnamefont{Jefferys}},
  \bibinfo{journal}{Proc. Royal Soc. A} \textbf{\bibinfo{volume}{186}},
  \bibinfo{pages}{453} (\bibinfo{year}{1946}).

\bibitem[{\citenamefont{Jefferys}(1961)}]{Jeffreys:1961}
\bibinfo{author}{\bibfnamefont{H.}~\bibnamefont{Jefferys}},
  \emph{\bibinfo{title}{Theory of probability}}, \bibinfo{number}{p179}
  (\bibinfo{publisher}{Oxford}, \bibinfo{year}{1961}), \bibinfo{edition}{3rd}
  ed.
  
\bibitem[{\citenamefont{Gray}(1991)}]{Gray:2007a}
\bibinfo{author}{\bibfnamefont{R.~M.} \bibnamefont{Gray}},
  \emph{\bibinfo{title}{Entropy and information theory}}
  (\bibinfo{publisher}{Springer, Berlin}, \bibinfo{year}{1991}).  

\bibitem[{\citenamefont{Kitao and Go}(1999)}]{Kitao:1999lr}
\bibinfo{author}{\bibfnamefont{A.}~\bibnamefont{Kitao}} \bibnamefont{and}
  \bibinfo{author}{\bibfnamefont{N.}~\bibnamefont{Go}}, \bibinfo{journal}{Curr.
  Opin. Struct. Biol.} \textbf{\bibinfo{volume}{9}}, \bibinfo{pages}{164}
  (\bibinfo{year}{1999}).

\bibitem[{\citenamefont{Gohlke and Thorpe}(2006)}]{Gohlke:2006fk}
\bibinfo{author}{\bibfnamefont{H.}~\bibnamefont{Gohlke}} \bibnamefont{and}
  \bibinfo{author}{\bibfnamefont{M.~F.} \bibnamefont{Thorpe}},
  \bibinfo{journal}{Biophys. J.} \textbf{\bibinfo{volume}{91}},
  \bibinfo{pages}{2115} (\bibinfo{year}{2006}).

\end{thebibliography}

\begin{thebibliography}{5}
\expandafter\ifx\csname natexlab\endcsname\relax\def\natexlab#1{#1}\fi
\expandafter\ifx\csname bibnamefont\endcsname\relax
  \def\bibnamefont#1{#1}\fi
\expandafter\ifx\csname bibfnamefont\endcsname\relax
  \def\bibfnamefont#1{#1}\fi
\expandafter\ifx\csname citenamefont\endcsname\relax
  \def\citenamefont#1{#1}\fi
\expandafter\ifx\csname url\endcsname\relax
  \def\url#1{\texttt{#1}}\fi
\expandafter\ifx\csname urlprefix\endcsname\relax\def\urlprefix{URL }\fi
\providecommand{\bibinfo}[2]{#2}
\providecommand{\eprint}[2][]{\url{#2}}

\bibitem[{\citenamefont{Feynman}(1948)}]{Feynman:1948a}
\bibinfo{author}{\bibfnamefont{R.~P.} \bibnamefont{Feynman}},
  \bibinfo{journal}{Rev. Mod. Phys.} \textbf{\bibinfo{volume}{20}},
  \bibinfo{pages}{367} (\bibinfo{year}{1948}).

\bibitem[{\citenamefont{Heisenberg}(1930)}]{Heisenberg:1930a}
\bibinfo{author}{\bibfnamefont{W.}~\bibnamefont{Heisenberg}},
  \emph{\bibinfo{title}{The physical principles of quantum theory}}
  (\bibinfo{publisher}{U. Chicago Press}, \bibinfo{address}{Chicago},
  \bibinfo{year}{1930}).

\bibitem[{\citenamefont{Crooks}(1999)}]{Crooks:1999a}
\bibinfo{author}{\bibfnamefont{G.~E.} \bibnamefont{Crooks}},
  \bibinfo{journal}{Phys. Rev. E} \textbf{\bibinfo{volume}{60}},
  \bibinfo{pages}{2721} (\bibinfo{year}{1999}).

\bibitem[{\citenamefont{Seifert}(2005)}]{seifert:2005a}
\bibinfo{author}{\bibfnamefont{U.}~\bibnamefont{Seifert}},
  \bibinfo{journal}{Phys. Rev. Lett.} \textbf{\bibinfo{volume}{95}},
  \bibinfo{pages}{040602} (\bibinfo{year}{2005}).

\bibitem[{\citenamefont{Hocking and Young}(1988)}]{Hocking:1988a}
\bibinfo{author}{\bibfnamefont{J.~G.} \bibnamefont{Hocking}} \bibnamefont{and}
  \bibinfo{author}{\bibfnamefont{G.~S.} \bibnamefont{Young}},
  \emph{\bibinfo{title}{Topology}}, \bibinfo{number}{p. 9}
  (\bibinfo{publisher}{Dover}, \bibinfo{year}{1988}).

\end{thebibliography}
\end{document}